\documentclass[groupedaddress]{revtex4}

\usepackage[utf8]{inputenc}
\usepackage{amsmath,color,verbatim,graphicx,amsfonts,amssymb}

\newtheorem{definition}{Definition}

\newtheorem{corollary}{Corollary}
\newtheorem{proposition}{Proposition}

\newenvironment{proof}[1][Proof]{\noindent\textbf{#1.} }{\ \rule{0.5em}{0.5em}}

\newcommand{\acco}[1]{\left\{#1\right\}}

\newcommand{\pa}[1]{\left(#1\right)}

\newcommand{\BN}{\mathcal{BN}}
\newcommand{\TN}{\widetilde{\mathcal{N}}}

\newcommand{\incl}{\subseteq}

\newcommand{\C}{\mathcal{C}}

\newcommand{\N}{\mathcal{N}}

\newcommand{\Z}{\mathbb{Z}}

\newcommand{\id}{\operatorname{id}}

\newcommand{\Swap}[1]{S_{#1}}
\newcommand{\Loc}{\operatorname{Loc}}

\begin{document}

\title[Block representation of RCA, and TSCA]{A simple block representation of reversible cellular automata with time-symmetry}	

\author{Pablo Arrighi}
\affiliation{Universit\'e de Grenoble, LIG, 220 rue de la chimie, 38400 Saint-Martin-d'H\`eres, France\\
and \'Ecole Normale Supérieure de Lyon, LIP, 46 Allée d'Italie, 69364 Lyon, France}
\email{Pablo.Arrighi@ens-lyon.fr}

\author{Vincent Nesme}
\affiliation{QMIO, Freie Universität Berlin, Arnimallee 14, 14195 Berlin, Germany}
\email{Vincent.Nesme@qipc.org}

\keywords{Reversible Cellular Automata, Time-symmetric Cellular Automata}

\begin{abstract}
Reversible Cellular Automata (RCA) are a physics-like model of computation consisting of an array of identical cells, evolving in discrete time steps by iterating a global evolution $G$. Further, $G$ is required to be shift-invariant (it acts the same everywhere), causal (information cannot be transmitted faster than some fixed number of cells per time step), and reversible (it has an inverse which verifies the same requirements). An important, though only recently studied special case is that of Time-symmetric Cellular Automata (TSCA), for which $G$ and its inverse are related via a local operation. In this note we revisit the question of the Block representation of RCA, i.e. we provide a very simple proof of the existence of a reversible circuit description implementing $G$. This operational, bottom-up description of $G$ turns out to be time-symmetric, suggesting interesting connections with TSCA. Indeed we prove, using a similar technique, that a wide class of them admit an Exact block representation (EBR), i.e. one which does not increase the state space. 
\end{abstract}

\maketitle

\section*{Introduction}

\noindent {\em RCA, Block representation.} In \cite{Kari_blocks}, Kari showed that any one-dimensional or two-dimensional reversible cellular automaton (RCA) can be expressed as a composition of finite reversible gates (or `block permutations') and partial shifts. In two dimensions the proof is quite involved, the representation requires three layers of blocks, and it has been proved that this cannot be brought down to a two-layered block representation \cite{Kari_circuit_depth}; The problem is still open in higher dimensions.

However we may not need an exact representation, and be willing to encode our original cells into some larger ones (or equivalently to interleave some ancillary cells), as proposed in \cite{durandlose}. Then the construction of \cite{Kari_circuit_depth} shows that even $n$-dimensional RCA admit a two-layered block representation. In some sense what we are doing then is simulating the original RCA in a way which preserves the spatial layout of cells, with another, simpler RCA that we know admits a two-layered block representation. In this sense the intrinsically universal RCA \cite{Durand-LoseLATIN} also accomplishes this task.\\
{\em Our Section \ref{sec:SBR} revisits this issue in a minimalistic manner: In our construction each block can be interpreted a reversible version of the local update rule of the CA, moreover its size turns out to be exactly that of the Block Neighborhood introduced in \cite{block}.}

\noindent {\em TSCA, EBRs.} Recently another line of investigation has emerged which refines the now well-studied concept of RCA to admit a further requirement: That of time symmetry.  In simple terms, a CA $G$ is time-symmetric if $G$ is its own inverse up to a simple recoding $H$ of the cells.  More formally, $G^{-1}=H G H$ with $H$ a self-inverse CA. Credit must be given to \cite{gamo} for emphasizing time-symmetry as a property of CA, which has barely been studied for its own sake thus far.  It is clear nevertheless that many instances of time-symmetric CA (TSCA) can be encountered in the literature, as discussed in \cite{gamo} (for instance the Margolus lattice gas model).  In the above-discussed non-exact Block representation of RCA \cite{Kari_circuit_depth} just like in ours, the author first encodes a RCA $F$ into a TSCA $G_F$, and then provides an EBR of $G_F$.  As a consequence, one may wonder whether these issues, block representations of RCA and TSCA are only accidentally related, or whether exhibiting a reversible local implementation mechanism for $G$ amounts to unravelling the time-symmetry of $G$.\\
{\em Our Section \ref{sec:TSCABR} begins to explore this issue by showing the existence of an EBR for squares of locally time-symmetric CA. }

\section{A simple block representation}\label{sec:SBR}

In the classical picture a CA $G$ is usually defined by a local update rule $\delta$, namely a function from $\Sigma^\N$ to $\Sigma$, giving the new state of a cell as a function of the old state of its neighbours; It can be thought as a `local mechanism' for implementing $G$.
In other words, $\delta$ can be viewed as a local gate, and $G$ a circuit made by infinitely repeating $\delta$ across space as in Fig. \ref{fig:classicalcircuit}.

\begin{figure}[htbp]
\centering
\input{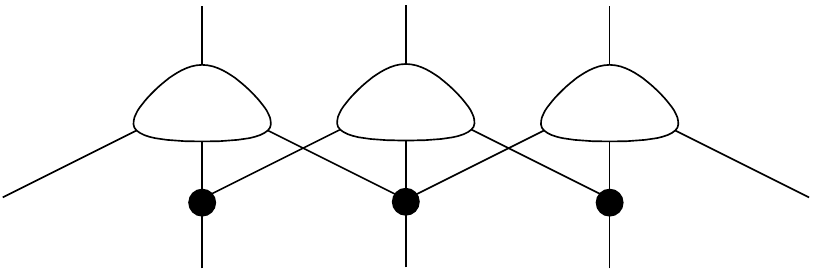_t}
\caption{The trivial circuit representation of a classical CA from its local update rule.\label{fig:classicalcircuit}}
\end{figure}

Using a local update rule to define RCA is of course possible, but for a circuit representation of $G$ one may wish to use a local mechanism that is itself reversible --- for instance in the context of quantum mechanical devices or due to Landauer's principle.  And indeed it is the case that every RCA $G$ admits a reversible circuit implementation.  Proving the existence of such reversible circuits is the business of the aforementioned block representation theorems for RCA. It could be regretted, however, that in these theorems the reversible local gates (a.k.a blocks) constitutive of the reversible circuits (a.k.a block representations) end up looking quite different from $\delta$.  I.e. they are hard to interpret as reversible versions of the local update rule.

The following proof of the block representation theorem for RCA is hopefully simpler to understand. It starts off by defining a reversible update operator $K_0$, which can be interpreted as a reversible version of the local update rule $\delta$.  We will define $K_0$ globally, in a way that does not make it obvious that it is actually a block permutation --- but we will then proceed to show that it is the case.  Notice that it is impossible to implement CA of non-trivial Welch index --- for a definition, cf. section~3 of \cite{Kari_blocks} --- without shifts or auxiliary space: In our case, we use auxiliary space, which results in the collateral damage of implementing, in parallel to $G$, its inverse on the auxiliary strip.

Repeatedly we will define a bijection $f$ from a set of words written on some fixed set of cells $X$, and then wonder whether $f$ could be defined on a smaller subset.  We will say that $f$ is localized upon $Y\incl X$ if we can write $f=f_Y\times\id_{Y\setminus X}$, i.e. if $Y\setminus X$ is superfluous in the definition of $f$.  For instance, a bijection of $\Sigma^\Z$ that applies a permutation of the alphabet on cell $0$ and leaves the other cells untouched is localized upon $Y\incl \Z$ if $Y$ contains $0$; The identity is localized on the empty set.

From the definition, it is obvious that if $f$ is localized upon $Y$ and $Y\incl Z\incl X$, then $f$ is also localized upon $Z$.  Slightly less trivial is the property that, whenever $f$ is localized upon $Y$ and $Z$, then it is also localized upon their intersection $Y\cap Z$.  From there follows the existence of the smallest $Y$ upon which $f$ is localized, which is called the \emph{localization} of $f$, and denoted $\Loc(f)$.  So, back to our elementary example where $f$ is a permutation $\pi$ of $\Sigma$ applied solely on cell $0$, $\Loc(f)=\left\{\begin{array}{ll}\emptyset&\text{if $\pi=\id$} \\ \acco{0}&\text{otherwise} \end{array} \right.$.

In general, $K_0$ is not localized upon the neighborhood of $G$.  We will show however that its localization is $\BN$, the Block neighborhood defined in \cite{block} whose definition we will recall. 
Hence it can thus be viewed as  a block permutation of size $|\BN|$.  
The last step of the proof is just to show that $G$ a circuit made by infinitely repeating $K$ across space.

\subsection*{Reversible updates $K_i$\ldots}
\label{sec:SimpleBR}

In the classical picture, the local update rule $\delta$ looks at a neighborhood $\cdots c_{-1}c_0c_{1}\cdots $ and computes $G(c)_0$, but it leaves all the other cells uncomputed. Can we, in a similar fashion, define a reversible update $K_0$ which focuses on computing $G(c)_0$? Moreover can we, in an again a similar fashion, define it solely in terms of $G$? A naive, operational approach would be to: 1. Apply $G$. 2. Swap $G(c)_0$ out of the system. 3. Apply $G^{-1}$. This will turn out to work. Technically, we will extend the alphabet to $\Sigma^2$.  For $i$ running over all cells, we denote by $\Swap{i}$ the swap acting only on position $i$ according to $\pa{\begin{array}{rcl}\Sigma^2&\to&\Sigma^2 \\ (a,b)&\mapsto &(b,a)\\ \end{array}}$.

\begin{definition}[reversible update]\label{def:K}
The {\em reversible update} $K_i$ is the function from $\C_{\Sigma^2}\simeq \C_{\Sigma}^2$ to itself given by the following composition
$$K_i=(G^{-1}\times\id) \Swap{i} (G\times\id)$$
where $\C_\Sigma$ denotes the space of configurations of cells having alphabet $\Sigma$.
\end{definition}

We can right now formulate the important remark that the $K_i$-s commute.  We will later prove with Proposition~\ref{prop:Kaloc} that each $K_i$, despite being defined globally, is actually a local permutation, acting in some neighborhood of cell $i$; Let us admit this fact within this paragraph.  With these informations in mind, it makes sense to define the infinite product $\prod\limits_{i} K_i$.  Indeed, for any given cell, the number of $K_i$-s acting on this cell is finite; Therefore the composition of all the $K_i$-s can be written as a circuit of finite depth and is thus perfectly well-defined.  Moreover, it is equal to $(G^{-1}\times \id)\Swap{}(G\times\id)$, where $\Swap{}=\prod\limits_{i} \Swap{i}$. Therefore we have $\Swap{}\prod\limits_{i} K_i=G\times G^{-1}$. 

Let us take a closer at $K_0$.  Start with a configuration $\ldots(c_i,d_i)\ldots$.  Applying $G\times\id$ takes it to $\ldots(G(c)_i,d_i)\ldots$.  Then $\Swap{0}$ turns it into 

$$\ldots(G(c)_{-2},d_{-2}),(G(c)_{-1},d_{-1}),(d_{0},G(c)_{0}),(G(c)_{1},d_{1}),(G(c)_{2},d_{2})\ldots$$  

So $K_0$ leaves the second component unchanged, except in position $0$.  In fact, the rest of the second component could be left out in the definition of $K_0$, since it plays no role.  Specifically, one can write $K_0$ as a product of the identity on these cells and of some bijection of $\C_{\Sigma}\times\Sigma$.  The left component, after applying $K_0$, finds itself in the state $G^{-1}(\ldots G(c)_{-2}G(c)_{-1}d_0G(c)_{1}G(c)_{2}\cdots)$.  Of course, outside of some neighborhood of $0$, this is the identity; But that triviality alone is not enough to conclude that $K_0$ is localized upon a finite number of cells.  We are going to check that it is indeed the case, and moreover that its localization is a rather remarkable set.

\subsection*{\ldots are localized within the Block Neighborhood $\BN$ \ldots}

\begin{figure}[htbp]
\centering
\input{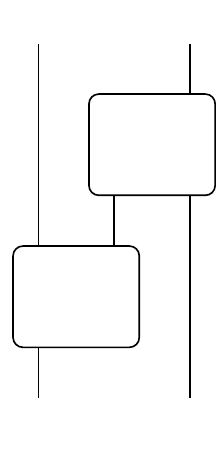_t}
\caption{Semilocalizability.\label{fig:defsemiloc}}
\end{figure}

\begin{figure}[htbp]
\centering
\input{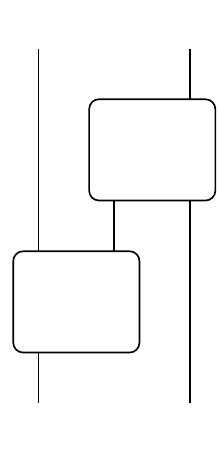_t}
\caption{The block neighborhood.\label{fig:semiloc}}
\end{figure}

In~\cite{block}, the authors introduced the block neighborhood $\BN$ of a RCA, using the concept of semilocalizability that appeared in \cite{semicausal} in the context of quantum information theory.  Given a bijection $F:X\to Y$ and a decomposition of $X$ and $Y$ in respectively $A\times B$ and $C\times D$, $F$ is said to be semilocalizable (with respect to this decomposition) when it can be written in the form of Figure~\ref{fig:defsemiloc}, where $g$ and $\tilde h$ are themselves bijections.  The quantum neighborhood of a RCA $F$ is then the smallest subset $\BN$ such that, as a function from $\Sigma^{\BN}\times\Sigma^{\overline{\BN}}$ to $\Sigma^{\acco{0}}\times\Sigma^{\overline{\acco{0}}}$, $F$ is semilocalizable --- see Figure~\ref{fig:semiloc} for an illustration. 

The definition of the block neighborhood was motivated by the fact that it is both the (quantum) neighborhood of the quantum CA obtained by linearization from a RCA, and obviously related to the decomposition of a QCA into a product of local permutations, a link that we make more precise in this article.  More details on $\BN$ are to be found in \cite{block}, where it is the object of definition~1.9, and where explicit bounds on $\BN$ are given in function of the neighborhoods of $G$ and of its inverse.  We will not need these bounds, except for the fact that they do prove that $\BN$ is finite:
\begin{itemize}
\item $\BN$ is included in $(\N - \N + \TN) \cap (\TN - \TN + \N)$, with $\TN$ the transpose of the inverse neighborhood $\N^{-1}$. There are examples saturating this bound;
\item $\BN(G^k)/k$ tends towards $\N(G^k)\cup\TN(G^k)$ in the limit where $k$ goes to infinity, with $\BN(G^k)$ the Block Neighborhood of $G^k$ etc.
\end{itemize}
In the definition of $K_0$, cells are divided into two subcells, so that these subcells are naturally indexed by $\acco{0,1}\times\Z$.  We now prove that the localization of $K_0$ is essentially the block neighborhood $\BN$; As $\BN$ is also the \emph{quantum} neighborhood, i.e. the neighborhood when inputs are not just words but can be linear combination on words (cf.~\cite{block}), this gives a nice way to characterize the quantum dynamics in a purely classical setting.

\begin{proposition}\label{prop:Kaloc}
Consider a RCA $G$, and let $K_0$ be its reversible update. Then $\Loc(K_0)=\acco{0}\times\BN\cup\acco{(1,0)}$. 
\end{proposition}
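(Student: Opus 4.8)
The plan is to establish the two inclusions separately; the substance lies entirely in the second one. Throughout, write $x^{0\ot s}$ for the configuration $x$ with its cell $0$ reset to the symbol $s$. For the inclusion $\Loc(K_0)\incl\acco{0}\times\BN\cup\acco{(1,0)}$, I would use the definition of $\BN$ recalled above: $G$, seen as a bijection from $\Sigma^{\BN}\times\Sigma^{\overline{\BN}}$ to $\Sigma^{\acco{0}}\times\Sigma^{\overline{\acco{0}}}$, is semilocalizable, so there are a finite set $E$ and bijections $g\colon\Sigma^{\BN}\to\Sigma^{\acco{0}}\times E$ and $\tilde h\colon E\times\Sigma^{\overline{\BN}}\to\Sigma^{\overline{\acco{0}}}$ with $G=(\id_{\acco{0}}\times\tilde h)(g\times\id_{\overline{\BN}})$. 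Substituting this factorization into $K_0=(G^{-1}\times\id)\Swap{0}(G\times\id)$ and following the wires: $g\times\id$ turns $c|_{\BN}$ into $(G(c)_0,\epsilon)$ with $\epsilon\in E$, and $\tilde h$ produces $G(c)|_{\overline{\acco{0}}}$ from $(\epsilon,c|_{\overline{\BN}})$; then $\Swap{0}$ merely exchanges the freshly computed $G(c)_0$ with the ancillary symbol $d_0$; and $G^{-1}$, run as $(g^{-1}\times\id_{\overline{\BN}})(\id_{\acco{0}}\times\tilde h^{-1})$, reads the same $\epsilon$ back off $G(c)|_{\overline{\acco{0}}}$ and $c|_{\overline{\BN}}$ and applies $g^{-1}$ to $(d_0,\epsilon)$. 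Reading off the outcome, $K_0$ is the identity on first-layer cells outside $\BN$ and on second-layer cells other than $0$, while on the remaining cells it is the conjugate of a transposition by $g$, hence a bijection of $\Sigma^{\acco{0}\times\BN}\times\Sigma^{\acco{(1,0)}}$ --- which is exactly the claim that $K_0$ is localized upon $\acco{0}\times\BN\cup\acco{(1,0)}$.

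For the reverse inclusion, suppose $K_0$ is localized upon $Y$. Since we already know $\Loc(K_0)$ lies inside the first layer together with $(1,0)$, we may take $Y=\acco{0}\times Z\cup(Y\cap\acco{(1,0)})$ for some $Z\incl\Z$. Recall from the discussion of $K_0$ above that it sends $(c,d)$ to the configuration whose first layer is $G^{-1}\!\left(G(c)^{0\ot d_0}\right)$ and whose second layer is $d$ with cell $0$ overwritten by $G(c)_0$; as $G(c)_0$ does not depend on $d_0$, the second-layer cell $0$ genuinely moves when $|\Sigma|\geq 2$, so $(1,0)\in Y$ and $Y=\acco{0}\times Z\cup\acco{(1,0)}$. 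It then suffices to prove $Z\supseteq\BN$, which by minimality of $\BN$ follows once $G$ is exhibited as semilocalizable with respect to $(Z,\overline Z)$ and $(\acco{0},\overline{\acco{0}})$. The localization hypothesis supplies the raw material: $G(c)_0$ depends only on $c|_Z$, giving a map $\gamma\colon\Sigma^Z\to\Sigma^{\acco{0}}$; the $Z$-part of $G^{-1}\!\left(G(c)^{0\ot d_0}\right)$ depends only on $(c|_Z,d_0)$, giving a map $\kappa\colon\Sigma^Z\times\Sigma\to\Sigma^Z$; and $(c|_Z,d_0)\mapsto(\kappa(c|_Z,d_0),\gamma(c|_Z))$ is the restriction $k$ of $K_0$, a bijection of $\Sigma^Z\times\Sigma$. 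Crucially, $K_0$ and hence $k$ is an involution (immediate from $\Swap{0}^2=\id$), which unpacks to $\gamma(\kappa(c|_Z,d_0))=d_0$ and $\kappa(\kappa(c|_Z,d_0),\gamma(c|_Z))=c|_Z$.

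From here the plan is to set $E:=\Sigma^Z/{\sim}$, where $c|_Z\sim c'|_Z$ iff $\kappa(c|_Z,\cdot)=\kappa(c'|_Z,\cdot)$, with quotient map $\epsilon$, and then to take $g:=(\gamma,\epsilon)$ and $\tilde h(\epsilon(c|_Z),c|_{\overline Z}):=G(c)|_{\overline{\acco{0}}}$. The two involution identities, together with the fact that $K_0$ fixes the first layer outside $Z$ (so that altering $d_0$ merely overwrites cell $0$ downstream of $G$, whence $G(c')=G(c)^{0\ot d_0}$ for configurations agreeing on $\overline Z$), should show that each $\sim$-class equals $\acco{\kappa(c|_Z,s):s\in\Sigma}$ and has size exactly $|\Sigma|$; that $g$ is a bijection (injective because $k$ is, surjective by the size count); that $G(c)|_{\overline{\acco{0}}}$ indeed depends only on $(\epsilon(c|_Z),c|_{\overline Z})$, so $\tilde h$ is well defined; and that $G=(\id_{\acco{0}}\times\tilde h)(g\times\id_{\overline Z})$, which then forces $\tilde h$ to be a bijection too. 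That is the required semilocalization, so $\BN\incl Z$ and $\Loc(K_0)=\acco{0}\times\BN\cup\acco{(1,0)}$.

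I expect the only genuinely non-mechanical step to be this last one: promoting the bare localization of $K_0$ to an honest semilocalization of $G$, and in particular verifying that all $\sim$-classes have size $|\Sigma|$, equivalently that $g$ is onto. This is precisely where involutivity of $K_0$ is indispensable: from $\gamma\circ\kappa(c|_Z,\cdot)=\id_\Sigma$ one gets $|\acco{\kappa(c|_Z,s):s\in\Sigma}|=|\Sigma|$ and that $\gamma$ restricts to a bijection of this set onto $\Sigma$, and the second identity is what identifies that set with the full $\sim$-class. The remainder --- wire-chasing, a counting argument, and the degenerate case $|\Sigma|=1$ --- is routine bookkeeping.
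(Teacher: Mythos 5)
Your proposal is correct. The inclusion $\Loc(K_0)\incl\acco{0}\times\BN\cup\acco{(1,0)}$ is argued exactly as in the paper: substitute the semilocalization $G=(\id\times\tilde h)(g\times\id)$ into $K_0=(G^{-1}\times\id)\Swap{0}(G\times\id)$ and cancel $\tilde h$ against $\tilde h^{-1}$ across $\Swap{0}$. For the reverse inclusion, however, you take a genuinely different route. Both arguments reduce to exhibiting a semilocalization of $G$ whose ``$g$-part'' is supported on $Z=\Loc(K_0)_0$ and then invoking the minimality of $\BN$; the difference lies in how the pair $(g,\tilde h)$ is produced. The paper obtains it structurally, almost for free: it takes $\tilde h=\prod_{n\neq 0}K_n$, using that the $K_n$ commute, that $\prod_n K_n=(G^{-1}\times\id)\Swap{}(G\times\id)$, and that $\Loc\pa{\prod_{n\neq 0}K_n}$ avoids $(1,0)$, so that $K_0$ itself (with the ancilla fixed) plays the role of $g$. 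You instead build $E$, $g$ and $\tilde h$ explicitly from the local data of the single block $K_0$: the quotient $E=\Sigma^Z/{\sim}$ identifying words with the same section $\kappa(\cdot,\cdot)$, with the involutivity of $K_0$ supplying the identities $\gamma\pa{\kappa(v,s)}=s$ and $\kappa(\kappa(v,s),\cdot)=\kappa(v,\cdot)$ that make every $\sim$-class equal to $\acco{\kappa(v,s):s\in\Sigma}$, of size $|\Sigma|$, whence $g=(\gamma,\epsilon)$ is bijective and $\tilde h$ is well defined. I checked these verifications and they go through; note that surjectivity of $g$ also follows without the counting argument, since $\epsilon(\kappa(v,s))=\epsilon(v)$ while $\gamma(\kappa(v,s))=s$. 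What the paper's route buys is brevity and the reuse of the already-established product structure over all cells; what yours buys is self-containedness --- it never invokes the other $K_n$, works from the localization and involutivity of $K_0$ alone, and makes the hidden bijection $\tilde h$ completely explicit, which is arguably more informative about what $\BN$ actually encodes.
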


\begin{proof}
$[\subseteq]$. Consider a  $\tilde hg$-decomposition of $G$ in the manner of Figure~\ref{fig:semiloc}. Then $g$ is localized upon $\BN$, $\tilde h$ outside of cell $0$, and
\begin{eqnarray*}
K_0&=&(G^{-1}\times\id) \Swap{0} (G\times\id)\\
&=&((\tilde hg)^{-1}\times\id) \Swap{0} ((\tilde hg)\times\id)\\
&=&(g^{-1}\times\id)(\tilde h^{-1}\times\id) \Swap{0} (\tilde h\times\id) (g\times\id)\\
K_0&=&(g^{-1}\times\id) \Swap{0} (g\times\id)
\end{eqnarray*}
where the last line follows from the fact that $\Loc(\tilde h)$ does not contain $\acco{0}$, whereas $\Swap{0}$ is  localized upon cell $0$.  From this last line we can read $\Loc(K_0)\incl\acco{0}\times\BN\cup\acco{(1,0)}$.\\
$[\supseteq]$. {\em Note that this second inclusion is no needed for the proof of the Block representation; It is provided here just for completeness.}  As we have already mentioned,  $\Loc(K_0)$ is of the form $\Loc(K_0)_0\cup \acco{(1,0)}$.  So $\Loc{\prod\limits_{n\neq 0} K_n}$ does not contain $(1,0)$.  But $K_0\prod\limits_{n\neq 0} K_n= (G^{-1}\times \id)\Swap{}(G\times\id)$.  For $a\in\Sigma$, let $X_a$ be the subset of words on $\Loc(K_0)$ that are equal to $a$ on $(1,0)$.  The image of $X_a$ by $S_0(G\times\id)$ is of the form $Y_a\times\Sigma$, where $Y_a$ is the set of words on $\Loc(K_0)_0\cup\acco{(0,0)}$ that are equal to $a$ in $(0,0)$, and $\Sigma$ is localized on $(1,0)$.  Therefore the image of $X_a$ by $K_0$ is also of the form $Z_a\times\Sigma$ for some subset $Z_a$ of the words on $\Loc(K_0)_0$. \\ 
Furthermore, we know that there exists a bijection finishing the job after the isolation of $G(c)_0$ by $K_0$, namely $\prod\limits_{n\neq 0}K_n$.  We must thus have a semilocalization of $G$ with respect to $\Loc(K_0)_0$: In figure~\ref{fig:semiloc}, $K_0$ plays the role of $g$, $\BN$ is $\Loc(K_0)_0$, and $\tilde h$ is $\prod\limits_{n\neq 0}K_n$.  Since $\BN$ is the smallest set fulfilling this property, it must then be included in $\Loc(K_0)_0$.
\end{proof}

\subsection*{\ldots and thus implement $G$.}

Combining the above results we obtain the following:
\begin{corollary}[$G\times G^{-1}=\Swap{}(\prod K)$]\label{prop:produit}
Consider a RCA $G$, and let $K$ be its reversible update. Consider the function $G\times G^{-1}$ from $\C_{\Sigma}^2$ to $\C_{\Sigma}^2$. We have that 
$$G\times G^{-1}=\Swap{}\prod_i K_i\quad\textrm{with}\quad\Loc(K_0)=\acco{0}\times\BN\cup\acco{(1,0)}.$$
\end{corollary}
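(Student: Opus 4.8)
The plan is to simply assemble the pieces already established. The localization statement $\Loc(K_0)=\acco{0}\times\BN\cup\acco{(1,0)}$ is precisely the content of Proposition~\ref{prop:Kaloc}, so nothing remains to be done there; only the identity $G\times G^{-1}=\Swap{}\prod_i K_i$ needs to be spelled out in detail.

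First I would recall why the infinite product $\prod_i K_i$ is meaningful: by Proposition~\ref{prop:Kaloc} each $K_i$ is a finite block permutation acting within a bounded neighborhood of cell $i$, so every (sub)cell is affected by only finitely many of the $K_i$; moreover the $K_i$ pairwise commute, as noted right after Definition~\ref{def:K}. Hence $\prod_i K_i$ is a bona fide finite-depth reversible circuit and the order of composition is irrelevant. Next I would compute the product: expanding $K_i=(G^{-1}\times\id)\Swap{i}(G\times\id)$ and composing, each adjacent pair $(G\times\id)(G^{-1}\times\id)$ telescopes to the identity, leaving $\prod_i K_i=(G^{-1}\times\id)\Swap{}(G\times\id)$ with $\Swap{}=\prod_i\Swap{i}$. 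Composing on the left with $\Swap{}$ and using that conjugation by the global swap exchanges the two registers, so that $\Swap{}(G^{-1}\times\id)\Swap{}=\id\times G^{-1}$, one obtains $\Swap{}\prod_i K_i=(\id\times G^{-1})(G\times\id)=G\times G^{-1}$, as claimed.

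The only point that demands a little care — and which I regard as the crux — is justifying the telescoping cancellation in the infinite setting: one cannot literally cancel infinitely many pairs without first fixing the finite-depth circuit structure. The clean way is to argue locally: fix an arbitrary (sub)cell, observe that only finitely many $K_i$ touch it, perform the cancellation within that finite subproduct, and note that the outcome is independent of the chosen finite truncation because the commuting $K_i$ lying outside the relevant window act trivially on that cell. This is exactly the reasoning already sketched in the paragraph following Definition~\ref{def:K}, so in the write-up I would simply invoke it rather than repeat it.
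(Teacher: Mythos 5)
Your proposal is correct and follows essentially the same route as the paper: the corollary is obtained by combining Proposition~\ref{prop:Kaloc} with the telescoping computation $\prod_i K_i=(G^{-1}\times\id)\,\Swap{}\,(G\times\id)$ already carried out in the paragraph following Definition~\ref{def:K}, exactly as you do. Your extra care in justifying the infinite cancellation cell-by-cell (using commutation, finite localization of each $K_i$, and causality of $G^{-1}$) is a welcome tightening of an argument the paper only sketches, but it is not a different proof.
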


Hence we have here a proof that all RCA admit a block representation, the third of its genre \cite{Kari_blocks,durandlose}, but hopefully also the most straightforward, as it simply takes the form a product of reversible updates.  There is one bad and one good news about this proof. The bad news is that it provides only a non-exact Block representation of RCA, leaving it open whether $n>2$-dimensional RCA admit an EBR or not.  The good news is that it provides an EBR for those TSCA which are of the form $G\times G^{-1}$.  This suggests that we should look at the relation between EBRs and time-symmetry of CA.

\section{EBRs and time-symmetry} \label{sec:TSCABR}

The core of the argument that we developed in the previous section for the existence of an EBR for $G\times G^{-1}$ could be restated as follows:  Say $F$ and $H$ are RCA such that $H$ admits an EBR, then so does $FHF^{-1}$! Indeed, if $H=\prod\limits_{i}B_i$, 
then $FHF^{-1}=\prod\limits_{i}FB_i F^{-1}$.  Moreover following Proposition \ref{prop:Kaloc}.$[\subseteq]$, the blocks $FB_i F^{-1}$ are localized, at most, on the localization of $B_i$ extended by $\BN(F)$ the block neighborhood of $F$; Hence each of them is finitely localized, i.e. is itself a block permutation.\\
In Section~\ref{sec:SimpleBR} we applied this argument with $F=G^{-1}\times \id$ and $H=\Swap{}$, which admits a trivial block representation $\Swap{}=\prod\limits_{n\in\Z}\Swap{n}$.  This gave an EBR of $(G^{-1}\times \id)\Swap{}(G\times \id)$, which is only a swap away from $G\times G^{-1}$.  
In fewer words, $G\times G^{-1}$ admits an EBR because the set of RCA having this property 
\begin{itemize}
\item contains the permutations of $\Sigma$, and
\item is a normal subgroup of the group of RCA.
\end{itemize}

Having generalized this procedure, let us now have a look at what it tells us in the context of TSCA.

\begin{definition}[Locally Time-Symmetric CA]\label{def:LTSCA}
A RCA $G$ is a {\em locally time-symmetric CA} (LTSCA) if there exists an involution $h$ of $\Sigma$ such that $G^{-1}=H G H$, with $H=\prod_i h$.
\end{definition}
Our definition of LTSCA is identical to that of TSCA given in \cite{gamo} except for one extra condition: We further demand that the RCA $H$ be of radius zero.  On this question of the locality of $H$, let us quote the authors of this first paper introducing TSCA \cite{gamo}: ``\emph{Requiring $H$ to be a CA is somewhat arbitrary, [\ldots] the reason for this restriction is that we expect reversibility (including the particular case of time-symmetry) to be a local property.}''. Moreover, whilst the theoretical results they prove are valid for $H$ an involution RCA of arbitrary radius, it also true that in all of the examples provided, $H$ is of radius zero. In fact, one may wonder whether there LTSCA and TSCA are not equivalent up to a simple encoding.\\
Anyhow, if $H$ has radius zero, then in particular it admits an EBR, and so does $GHG^{-1}H=G^2$.  Therefore, the squares of LTSCA have EBRs:

\begin{corollary}[EBR of ${\textrm{LTSCA}}^2$]\label{prop:BRofLTSCA}
Let $G$ be an LTSCA with respect to an involution $h$ 
. We have $G^2= H \prod\limits_i L_i$, where $L_i=G^{-1} h_i G$, furthermore $\Loc(B_0)\subseteq\BN$.
\end{corollary}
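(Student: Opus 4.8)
The plan is to unravel the definition of LTSCA and then apply the normal-subgroup argument from Section~\ref{sec:SBR} verbatim. First I would start from $G^{-1}=HGH$ with $H=\prod_i h$ and $h$ an involution of $\Sigma$; since $H^2=\id$, rewriting gives $G^{-1}H = HG$, hence $G^2 = G\cdot G = G(HG^{-1}H)^{-1}\cdot\text{(nothing)}$---more directly, from $G^{-1}=HGH$ we get $H G^{-1} = GH$, so $G H G^{-1} = GHG^{-1}$ and multiplying out $GHG^{-1}H$: using $H G^{-1} = GH$ inside, $GHG^{-1}H = G(HG^{-1})H = G(GH)H = G^2 H^2 = G^2$. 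So the key algebraic identity is $G^2 = GHG^{-1}H$, equivalently $G^2 H = GHG^{-1}$ after cancelling the trailing $H$ via $H^2=\id$. That is exactly the form $FHF^{-1}$ (with $F=G$) that the previous section showed preserves the EBR property.

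Next I would invoke that $H$ has radius zero, so $H=\prod_i h_i$ is already a block representation (each block acts on a single cell, trivially finitely localized). Applying the conjugation argument: $GHG^{-1} = G\left(\prod_i h_i\right)G^{-1} = \prod_i G h_i G^{-1} = \prod_i L_i$ where $L_i := G^{-1}h_i G$ up to the obvious relabelling---here I should be careful to match the paper's stated $L_i=G^{-1}h_iG$, which means the identity being used is $G^2 = H\prod_i L_i$, i.e. $H^{-1}G^2 = HG^2 = \prod_i G^{-1}h_iG$; one checks $HG^2 = HG\cdot G = G^{-1}H\cdot G$ (using $HG=G^{-1}H$)... this needs $G^{-1}HG = \prod_i G^{-1}h_iG = \prod_i L_i$, which is immediate since conjugation distributes over the product and $H=\prod h_i$. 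So $HG^2 = G^{-1}HG = \prod_i L_i$, giving $G^2 = H^{-1}\prod_i L_i = H\prod_i L_i$ since $H$ is an involution. I would lay out this chain of three or four equalities in a single \texttt{eqnarray*} block.

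Finally, for the localization claim $\Loc(L_0)\subseteq\BN$ (the statement writes $B_0$, evidently a typo for $L_0$): each $L_i = G^{-1}h_i G$ is a conjugate of the single-cell permutation $h_i$ by $G$. By Proposition~\ref{prop:Kaloc}.$[\subseteq]$ applied in the same spirit---the blocks $FB_iF^{-1}$ are localized on $\Loc(B_i)$ extended by $\BN(F)$---here $\Loc(h_0)\subseteq\{0\}$ and $F=G^{-1}$, so $\Loc(L_0)$ is contained in $\{0\}$ extended by $\BN(G^{-1})$. To get the cleaner bound $\BN$ I would note that $0\in\BN$ always (the block neighborhood of a RCA contains the cell being computed, as recalled in \cite{block}) and that $\BN(G)=\BN(G^{-1})$ is symmetric under inversion, a basic property of the block neighborhood; alternatively one argues directly as in the proof of Proposition~\ref{prop:Kaloc} that the $\tilde hg$-decomposition of $G$ localizes $g$ upon $\BN$, and $L_0 = g^{-1}\tilde h^{-1}h_0\tilde h g = g^{-1}h_0 g$ since $\tilde h$ avoids cell $0$ exactly as $\Swap{0}$ did there.

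The main obstacle I expect is purely bookkeeping: getting the exact form $G^2 = H\prod_i L_i$ with $L_i = G^{-1}h_iG$ to come out of the abstract identity $G^2=GHG^{-1}H$ requires tracking which side the $H$'s land on after repeatedly using $HG=G^{-1}H$, and it is easy to end up with $GHG^{-1}$ ($=\prod_i Gh_iG^{-1}$) rather than $G^{-1}HG$. Once the algebra is pinned down, everything else is a direct citation of the Section~\ref{sec:SBR} machinery, so there is no genuine mathematical difficulty---the content is entirely that ``radius-zero involution'' plus ``normal subgroup'' yields the EBR, and the block $L_0$ inherits finite localization inside $\BN$ from the conjugation bound.
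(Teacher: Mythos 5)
Your proposal is correct and follows essentially the same route as the paper's proof: conjugation distributes over the product $H=\prod_i h_i$ to give $\prod_i L_i = G^{-1}HG$, the defining relation $G^{-1}=HGH$ (equivalently $HG^{-1}H=G$) then yields $H\prod_i L_i=G^2$, and the localization of $L_0$ is obtained exactly as in the $[\subseteq]$ part of Proposition~\ref{prop:Kaloc}. Your detour through $G^2=GHG^{-1}H$ before rearranging is slightly more roundabout than the paper's direct computation of $H\prod_i L_i$, but the content is identical.
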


Some remarks are in order:
\begin{itemize}
\item $h_0$ plays the role that $\Swap{0}$ had in section \ref{sec:SimpleBR}.  Likewise, in the standard examples of TSCA \cite{gamo}, $H$ can be interpreted as a swap.  This is certainly the case in particular for the standard time-symmetrizations $G\times G^{-1}$ of any RCA $G$, as in Prop. 5.3. of \cite{gamo}.
\item This time the block representation is an exact one, hence it is remarkable that LTSCA have this property given the difficulty of finding the EBRs of $n>2$-dimensional RCA. Nevertheless, the representation applies to $G^2$ and not $G$ itself.  Simply proving that any involutive RCA admits an EBR is probably difficult, as it gets dangerously close to solving the aforementioned open problem.
\end{itemize}

\section*{Conclusion}

\noindent {\em Generalizations.} As in \cite{block}, the block representation defined in Section~\ref{sec:SimpleBR}, and the proof that it is of minimal size, rely only on notions on neighborhood, while others characteristics of CA, such as finiteness of the alphabet and translation invariance, are simply irrelevant. Moreover, whilst the arguments we have provided in this paper are purely classical, they have their counterparts in the field of quantum CA \cite{Schumacher}, some of which were of direct inspirations to this paper \cite{ANW3}. Part of our motivation was to make these techniques available to classical CS. 

\noindent {\em Questions, answers and  more questions.} Why is time-symmetry such a key step Block representations of RCA? In this paper gave a simple proof of the block representation of RCA, which partly explains this role. Could it be that TSCA admit an EBR? In this paper we gave a simple proof of the EBR of squares of LTSCA. These are all but partial answers, suggesting that many questions remain on the topic of understanding differences in structure between RCA and TSCA, TSCA and LTSCA. There might lie a path towards EBRs of RCA in arbitrary dimensions.

\acknowledgements
\label{sec:ack}
The authors would like to thank Jarkko Kari, Anah\'i Gajardo, and funding by the Deutsche Forschungsgemeinschaft (Forschergruppe 635) and ANR JJC CCausaQ.

\bibliographystyle{alpha}
\bibliography{biblio}
\label{sec:biblio}

\end{document}